\documentclass[runningheads]{llncs}

\usepackage[utf8]{inputenc}
\usepackage{amsmath,amssymb}
\usepackage{mathtools}
\usepackage{microtype}
\usepackage{enumitem}
\usepackage{float}
\usepackage{pgfplots}
\pgfplotsset{compat=1.18}
\usepackage{hyperref}
\usepackage[capitalize,noabbrev]{cleveref}

\newcommand{\R}{\mathbb{R}}
\newcommand{\E}{\mathbb{E}}
\newcommand{\Var}{\mathrm{Var}}
\newcommand{\Cov}{\mathrm{Cov}}
\newcommand{\indep}{\perp\!\!\!\perp}
\newcommand{\sigv}{\sigma_v}
\newcommand{\sigu}{\sigma_u}
\newcommand{\sige}{\sigma_\varepsilon}
\newcommand{\ytil}{\tilde y}

\newif\ifwine
\winefalse 

\title{The Privacy Subsidy: Kyle's $\lambda$ under Noise-Perturbed
Order-Flow Observation}
\titlerunning{The Privacy Subsidy}
\author{Yuki Nakamura\orcidID{0009-0001-7174-6737}}
\authorrunning{Y. Nakamura}
\institute{The Open University of Japan}

\begin{document}
\maketitle

\begin{abstract}
\begin{sloppypar}
Privacy-preserving cryptocurrency exchanges alter what the
pricing mechanism observes about order flow. We derive the unique
linear Kyle equilibrium when a committed Bayesian market maker
observes order flow perturbed by independent Gaussian privacy
noise. The price-impact coefficient and informed-trader strategy
rescale by reciprocal factors of the privacy parameter (one down,
one up), so their product is invariant. A welfare decomposition then identifies a
closed-form per-period transfer from the protocol's LP pool to
traders --- the \emph{privacy subsidy}, the break-even fee any
privacy-aggregated exchange must charge. The result is the
single-period closed-form privacy-noise analog of
Loss-Versus-Rebalancing~\cite{milionis2022lvr}. The primary
application is shielded AMMs with explicit additive-noise
injection (e.g., differential privacy); related designs
(batched swaps, sealed-bid auctions, oracle-pegged crossings)
require separate frameworks that we leave to future work.
\end{sloppypar}

\keywords{Market microstructure \and Kyle equilibrium \and
ZK exchanges \and Adverse selection}
\end{abstract}

\section{Introduction}
\label{sec:intro}

Privacy-preserving exchange designs are an increasingly common
architecture in cryptocurrency markets. Renegade matches orders via
multi-party computation under zero-knowledge proofs~\cite{renegade-wp};
Penumbra batches swaps and reveals only batch totals on-chain
\cite{penumbra-docs}; Suave-style order-flow auctions seal
individual bids until builder selection~\cite{flashbots-suave};
shielded variants of constant-function automated market makers
inject privacy noise into the observable reserve state before
on-chain price updates. These designs all alter what the
liquidity-providing role --- whether an LP pool, an arbitrageur, or
a smart-contract pricing rule --- observes about order flow.

Classical microstructure theory~\cite{kyle1985} gives the
equilibrium price-impact coefficient $\lambda$ and informed-trader
strategy $\beta$ in closed form when the market maker observes the
full aggregate flow $y = x + u$. These results do not, however,
characterize the \emph{welfare} consequences when the market maker
prices on a noise-perturbed signal $\ytil = y + \varepsilon$, as
arises naturally in privacy-aggregated exchange designs.

\paragraph{Contributions.}
This paper provides three results, each of which is absent in or
distinct from textbook Kyle.

\begin{enumerate}[leftmargin=*]
\item A closed-form linear Kyle equilibrium under MM observation
$\ytil = y + \varepsilon$ with Gaussian privacy noise
$\varepsilon \sim N(0, \sige^2)$ independent of $(v, u)$,
under a \emph{committed Bayesian} MM pricing rule: the maker
prices at the posterior mean of the coarse signal, which is the
competitive price conditional on that signal. The only departure
from textbook Kyle is that the maker conditions on $\ytil$ rather
than the exact flow $y$, so it does not break even against the
real flow (\Cref{rem:competitive}); the equilibrium recovers Kyle
exactly in the $\sige = 0$ limit. The unique price-impact coefficient is
$\lambda = \sigv / (2\sqrt{\sigu^2 + \sige^2})$ with
informed-trader linear coefficient
$\beta = \sqrt{\sigu^2 + \sige^2}/\sigv$
(\Cref{thm:equilibrium}).

\item A welfare decomposition under this equilibrium, identifying a
per-period transfer
$|\pi_M| = \sigv \sige^2 / (2\sqrt{\sigu^2 + \sige^2})$
from the protocol/LP side to traders --- the \emph{privacy
subsidy} (\Cref{thm:welfare}). This quantity is identically zero
in textbook Kyle, where the market maker observes order flow
exactly; it turns positive once the privacy noise makes the
observed signal strictly coarser than the executed flow, so that
no price rule can be simultaneously informationally efficient and
zero-profit against the real flow (\Cref{rem:competitive}). The
subsidy is the break-even fee that any privacy-aggregated
exchange must charge to compensate its liquidity layer.

\item A primary mapping from $\sige$ to shielded AMMs with
explicit additive-noise (differential-privacy) injection,
yielding a per-design break-even fee. We also delineate the
boundary of applicability: Penumbra-style batched swaps reduce
to textbook Kyle with rescaled noise-trader variance (and no
privacy subsidy under Bayesian MM); Suave-style sealed-bid
order-flow auctions sit in LVR's temporal-asymmetry regime;
oracle-pegged crossings such as Renegade fall outside the Kyle
framework. These three nearby designs require separate frameworks
and are left to future work.
\end{enumerate}

This paper analyzes the equilibrium consequences of privacy in
exchange design, treating ZK primitives as a black box producing
$\ytil$; protocol-level soundness/completeness analysis is out of
scope.

\paragraph{Positioning.}
On the microstructure side, the result isolates a closed-form
welfare quantity (the privacy subsidy) under additive privacy
noise on the maker's flow signal; the equilibrium
$(\lambda, \beta)$ itself is textbook Kyle under the substitution
$\sigu^2 \mapsto \sigu^2 + \sige^2$, so the novelty lies in the
welfare object rather than the equilibrium. The closest
microstructure precedent on the observability axis is the strand
on insider disclosure~\cite{huddart2001disclosure}, where the
\emph{trader's} actions become more observable --- our case moves
the imperfection to the \emph{market maker's} signal instead. On
the decentralized-finance side, the privacy subsidy is the
single-period closed-form analog of Loss-Versus-Rebalancing
identified by Milionis et al.~\cite{milionis2022lvr}: LVR isolates
the cost to LPs from stale prices being picked off by arbitrageurs
with external price information, while our privacy subsidy
isolates the cost from privacy noise obscuring the MM's flow
observation. The two are closed-form welfare quantities of the
same family; their combination in a continuous-time AMM with
privacy is left to future work (\Cref{sec:lvr}).

\paragraph{Roadmap.}
\Cref{sec:related} surveys related work along three strands.
\Cref{sec:setup} sets up the model and equilibrium concept.
\Cref{sec:equilibrium} states and proves the main equilibrium
theorem. \Cref{sec:welfare} derives the welfare decomposition and
identifies the privacy subsidy. \Cref{sec:application} maps the
result to canonical zero-knowledge exchange designs.
\Cref{sec:lvr} discusses the connection to LVR.
\Cref{sec:conclusion} concludes.

\section{Related work}
\label{sec:related}

\subsection{AMM adverse-selection (closest)}

\begin{sloppypar}
Milionis et al.~\cite{milionis2022lvr} introduce
\emph{Loss-Versus-Rebalancing} (LVR): a closed-form continuous-time
measure of the cost AMM liquidity providers incur when stale prices
are picked off by better-informed arbitrageurs. The two costs
differ in source: LVR's arbitrageurs read an exogenous reference
price; our informed trader is the classical Kyle insider, and the
cost arises from the AMM's noisy observation of flow.
\end{sloppypar}

Routledge, Shen, and Zetlin-Jones~\cite{routledge2024amm}
characterize optimal liquidity provision in an AMM and study
how price impact depends on trade size and on the dynamics of
liquidity provision. Milionis, Moallemi, and
Roughgarden~\cite{moallemi2022myersonian} cast the
profit-maximizing strategy of a monopolist liquidity provider
into a Bayesian belief-inference framework with a Myersonian
mechanism-design interpretation. Brahma et
al.~\cite{brahma-bayesian-mm} developed a sequential
Bayesian-update market maker for binary-outcome prediction
markets; that line is structurally related to our
committed-Bayesian-AMM rule (Bayesian posterior pricing without a
zero-profit constraint), although the model class differs.

\subsection{Kyle-style microstructure}

Kyle's~\cite{kyle1985} single-period model with a risk-neutral
monopolist informed trader, an exogenous noise-trader pool, and a
competitive risk-neutral market maker is the framework we extend.
Foster and Viswanathan~\cite{foster1996multiple} characterize the
unique linear equilibrium when multiple informed traders receive
imperfectly correlated signals. Huddart, Hughes, and
Levine~\cite{huddart2001disclosure} modify Kyle by adding mandatory
ex-post disclosure of the informed trader's transactions; this
accelerates price discovery and reduces insider profits ---
essentially the dual of our setup, in which it is the
\emph{market maker's} signal rather than the trader's identity that
becomes opaque. Chhaibi, Ekren, and
Noh~\cite{chhaibi2025solvability} study a Gaussian Kyle
equilibrium with a risk-averse informed trader holding an imperfect
signal of the terminal value; the imperfection is on the trader
side, not the MM side. Non-fiduciary MMs~\cite{nonfiduciary-mm}
capture rents through fees; our market maker is instead
constrained to a Bayesian rule and \emph{bears} the cost.
Viswanathan and Xing~\cite{flexible-info-kyle} analyze
informed-trader information acquisition at entropy cost; another
direction of imperfection distinct from ours. A separate strand
lets the market maker condition on a noisy observation of the
\emph{asset value} alongside the order flow: Qiu and
Zhou~\cite{qiu2023insider} solve such a continuous-time Kyle model
under partial maker observation. Our coarsening is an
explicit additive privacy channel on the flow, and our
contribution is the closed-form welfare subsidy it induces under
committed Bayesian pricing, not the imperfect-observation
equilibrium itself.

\subsection{Dark-pool and privacy-preserving market structure}

Zhu~\cite{zhu2014darkpools} and Buti, Rindi, and
Werner~\cite{buti2017darkpool} analyze dark-pool trading as a
routing equilibrium, in which informed and uninformed orders
self-select between a lit market and a dark pool based on
execution risk. The mechanism is \emph{routing-based segmentation},
which is mathematically distinct from \emph{$\sigma$-algebra
coarsening} of a single market's signal. Bergemann and
Morris~\cite{bergemann2019infodesign} give the general
information-design framework that subsumes $\sigma$-algebra
coarsening as a special case; we apply it implicitly. Zhang et
al.~\cite{zhang2025mevbatch} analyze maximal extractable value in
batch-auction designs (including a discussion of Penumbra), showing
that block-builder reordering of batch contents can still extract
MEV; their framework is combinatorial-market (Fisher and
Arrow--Debreu) and complementary to ours.

For application targets, we read the
Renegade~\cite{renegade-wp} and Penumbra~\cite{penumbra-docs}
protocol specifications directly; we discuss applicability in
\Cref{sec:application}.

\section{Model setup}
\label{sec:setup}

\subsection{Primitives}

Let $(\Omega, \mathcal{F}, P)$ be a probability space supporting
the following independent Gaussian random variables:
\begin{align*}
v &\sim N(p_0, \sigv^2), \quad \sigv > 0, \\
u &\sim N(0, \sigu^2), \quad \sigu > 0, \quad u \indep v, \\
\varepsilon &\sim N(0, \sige^2), \quad \sige \geq 0, \quad
\varepsilon \indep (v, u).
\end{align*}
The random variable $v$ is the asset's terminal value and $p_0$ the
common prior mean. $u$ is the aggregate flow of uninformed
(noise) traders, independent of $v$. $\varepsilon$ is privacy
noise introduced by the exchange mechanism (see
\Cref{sec:application} for the realization in specific
zero-knowledge designs).

\subsection{Players and strategies}

There is one \emph{informed trader} (the Kyle insider) who observes
$v$ at time $0$ and submits an order of size $x$. We restrict
attention to linear strategies $x = \beta(v - p_0)$ for
$\beta \in \R_{>0}$, to be determined in equilibrium.

The \emph{noise-trader aggregate} contributes order size $u$ as
above. Total order flow is
\[
y = x + u.
\]
The \emph{market maker} (MM) does not observe $y$ directly. Instead,
the MM observes the privacy-noisy signal
\[
\ytil = y + \varepsilon.
\]
The privacy noise $\varepsilon$ perturbs only the signal on which
the MM prices (for instance, the observed reserve state); the
protocol still settles the true net order $y = x+u$ at the
resulting price. This signal-versus-settlement gap is what makes
the MM's expected P\&L against the real flow nonzero
(\Cref{sec:welfare}).
The MM is a \emph{committed Bayesian AMM}: a smart-contract pricing
rule that mechanically computes
\[
p(\ytil) = \E[v \mid \ytil]
\]
which is the competitive price conditional on $\ytil$; the maker
breaks even on its own signal but not against the real flow $y$,
which it cannot observe (\Cref{rem:competitive}). In a committed
smart-contract setting the rule is fixed by the mechanism, so the
maker cannot re-optimize away this loss, and the expected loss
against the real flow is absorbed by the protocol's LP pool. We
motivate the committed-rule setting in \Cref{ssec:framing}.

\subsection{Equilibrium concept}

We seek a linear equilibrium of the form
\[
x = \beta(v - p_0), \qquad p(\ytil) = p_0 + \lambda \ytil,
\]
where $(\beta, \lambda) \in \R_{>0}^2$ satisfy:
\begin{enumerate}[leftmargin=*]
\item \emph{Bayesian rationality of pricing}: $p(\ytil) =
\E[v \mid \ytil]$ under the conjectured strategy $\beta$.
\item \emph{Best response of informed trader}: $\beta(v - p_0)$
maximizes the informed trader's expected profit
$\E[(v - p(\ytil)) \cdot x \mid v]$, given $\lambda$.
\end{enumerate}

\subsection{The committed-Bayesian-AMM framing}
\label{ssec:framing}

The maker's restriction to the coarse signal $\ytil$, not any
departure from competitive pricing, is the modeling feature
distinguishing our setup from textbook Kyle (\Cref{rem:competitive}). We
adopt this in order to match the design of contemporary on-chain
pricing mechanisms, in which:
\begin{itemize}[leftmargin=*]
\item the pricing rule is encoded in a smart contract and is
mechanically applied to the observable input;
\item the input is privacy-noised by construction (e.g., via
differential privacy);
\item the resulting expected loss is borne by liquidity providers
or the protocol treasury, and is recouped through trading fees.
\end{itemize}
This structure parallels the LVR analysis of automated market
makers in Milionis et al.~\cite{milionis2022lvr}, where the MM's
pricing rule is the constant-function curve and the cost is borne
by LPs. We return to this connection in \Cref{sec:lvr}.

\paragraph{Idealized benchmark, not literal model.}
The committed-Bayesian-AMM is an idealized \emph{normative}
benchmark: what an informationally optimal smart-contract pricing
rule would look like under additive privacy noise. The closest
real-world instantiation is a constant-function AMM augmented
with differential-privacy noise injection, treated in detail in
\Cref{sec:application}. The benchmark's value is to isolate the
welfare cost of additive privacy noise, yielding a closed-form
quantity that practical mechanisms in this class must internalize
at the fee level.

\subsection{Other modeling choices}
\label{ssec:assumptions}

Four further assumptions warrant brief comment. \emph{Linearity.}
We restrict to $x = \beta(v - p_0)$, Kyle's standard restriction
(see Kyle~\cite{kyle1985}, \S3); a formal treatment of nonlinear
equilibria under privacy noise is outside our scope.
\emph{Independence $\varepsilon \indep (v, u)$.} Adaptive privacy
mechanisms whose noise correlates with the flow it perturbs are an
interesting separate problem; our result establishes the
unconditional baseline.
\emph{Gaussianity.} The closed-form subsidy is Gaussian-specific;
the qualitative existence of a positive privacy-induced LP loss
generalizes to other tractable distributions.
\emph{Endogenous $\sige$.} A privacy-utility tradeoff that
optimizes $\sige$ under additional constraints is a natural
extension we do not pursue here.

\section{Equilibrium under privacy-noisy observation}
\label{sec:equilibrium}

This section contains the paper's main equilibrium result.

\begin{theorem}[Linear equilibrium under privacy-noisy observation]
\label{thm:equilibrium}
Fix $\sigv, \sigu > 0$ and $\sige \geq 0$. The unique linear
equilibrium of the model of \Cref{sec:setup} has
\begin{equation}
\lambda = \frac{\sigv}{2\sqrt{\sigu^2 + \sige^2}},
\qquad
\beta = \frac{\sqrt{\sigu^2 + \sige^2}}{\sigv}.
\label{eq:equilibrium}
\end{equation}
\end{theorem}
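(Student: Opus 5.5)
The plan is the standard two-step Kyle argument: compute the Bayesian pricing coefficient as a function of a conjectured $\beta$, then compute the informed trader's best response to a given $\lambda$, and solve the resulting pair of equations.

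\emph{Step 1 (pricing).} Under the conjecture $x=\beta(v-p_0)$, the signal is
\[
\ytil=\beta(v-p_0)+u+\varepsilon .
\]
The pair $(v,\ytil)$ is jointly Gaussian because $v,u,\varepsilon$ are independent Gaussians. Hence $\E[v\mid\ytil]$ is affine in $\ytil$, with slope
\[
\lambda=\frac{\Cov(v,\ytil)}{\Var(\ytil)}=\frac{\beta\sigv^2}{\beta^2\sigv^2+\sigu^2+\sige^2}.
\]
Since $\E[\ytil]=0$, the intercept is $p_0$. So Bayesian rationality holds if and only if $p(\ytil)=p_0+\lambda\ytil$ with this $\lambda$. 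The key observation is that $u+\varepsilon\sim N(0,\sigu^2+\sige^2)$ is independent of $v$. The maker's problem is therefore exactly Kyle's with noise variance $\sigu^2+\sige^2$.

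\emph{Step 2 (best response).} Fix $\lambda>0$. For an arbitrary order $x$, the informed trader's conditional expected profit is
\[
\E[(v-p_0-\lambda(x+u+\varepsilon))\,x\mid v]=x(v-p_0)-\lambda x^2 .
\]
This uses $\E[u\mid v]=\E[\varepsilon\mid v]=0$, which follows from independence. The expression is strictly concave in $x$ because $\lambda>0$, so it is uniquely maximized at $x=(v-p_0)/(2\lambda)$. Thus within the linear class the best response is $\beta=1/(2\lambda)$. For $\lambda\le 0$ there is no maximizer, which is consistent with the requirement $(\beta,\lambda)\in\R_{>0}^2$.

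\emph{Step 3 (fixed point and uniqueness).} I substitute $\lambda=1/(2\beta)$ into the pricing equation:
\[
\frac{1}{2\beta}=\frac{\beta\sigv^2}{\beta^2\sigv^2+\sigu^2+\sige^2}.
\]
This simplifies to $\beta^2\sigv^2=\sigu^2+\sige^2$. Because $\beta>0$, the root is unique: $\beta=\sqrt{\sigu^2+\sige^2}/\sigv$, and then $\lambda=1/(2\beta)$ gives \eqref{eq:equilibrium}. For the converse, I check directly that this pair satisfies both equilibrium conditions, which establishes existence.

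There is no genuine obstacle here. The only points needing care are the following:
\begin{itemize}
\item The best response must be taken over all real $x$, not just linear rules. The pointwise-in-$v$ maximization handles this.
\item The negative root of the quadratic must be excluded, which the positivity restriction does.
\item The maker's price depends on the conjectured $\beta$, not on a deviation. This is the standard Nash fixed-point logic.
\end{itemize}
It is also worth noting that $\sige=0$ recovers Kyle's formulas.
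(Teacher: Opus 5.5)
Your proposal is correct and follows essentially the same route as the paper: the Gaussian projection gives $\lambda(\beta)$, pointwise concave maximization gives $\beta = 1/(2\lambda)$, and solving the resulting fixed point gives the equilibrium. You eliminate $\lambda$ and solve for $\beta$, whereas the paper eliminates $\beta$ and solves for $\lambda$; your explicit existence check and your remark on $\lambda \le 0$ are minor additions the paper leaves implicit.
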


\begin{proof}
Under the linear strategy $x = \beta(v - p_0)$ and price
$p = p_0 + \lambda \ytil$, the signal $\ytil$ is jointly Gaussian
with $v$:
\[
\ytil = \beta(v - p_0) + u + \varepsilon
\]
has mean $0$, variance $\beta^2 \sigv^2 + \sigu^2 + \sige^2$, and
covariance with $v$ equal to $\Cov(v, \ytil) = \beta \sigv^2$. By
the projection formula for jointly Gaussian variables,
\begin{equation}
\E[v \mid \ytil] = p_0 + \frac{\beta \sigv^2}
{\beta^2 \sigv^2 + \sigu^2 + \sige^2} \cdot \ytil.
\label{eq:projection}
\end{equation}
The Bayesian-rationality condition matches this against $p_0 +
\lambda \ytil$, yielding
\begin{equation}
\lambda = \frac{\beta \sigv^2}{\beta^2 \sigv^2 + \sigu^2 + \sige^2}.
\label{eq:lambda-projection}
\end{equation}

For the informed trader's best response, condition on $v$ and
optimize over $x$ given the MM's price schedule. Since
$u, \varepsilon \indep v$ implies
$\E[u \mid v] = \E[\varepsilon \mid v] = 0$,
\begin{align*}
\E[(v - p(\ytil)) \cdot x \mid v]
&= \E[(v - p_0 - \lambda(x + u + \varepsilon)) \cdot x \mid v] \\
&= (v - p_0) x - \lambda x^2.
\end{align*}
This is strictly concave in $x$ with second derivative $-2\lambda
< 0$. The unique maximizer is
\[
x^* = \frac{v - p_0}{2\lambda},
\]
so the best-response coefficient is $\beta = 1/(2\lambda)$.

Substituting $\beta = 1/(2\lambda)$ into
\eqref{eq:lambda-projection} and simplifying:
\begin{align*}
\lambda\left( \tfrac{1}{4\lambda^2} \sigv^2 + \sigu^2 + \sige^2 \right)
&= \frac{\sigv^2}{2\lambda}, \\
\frac{\sigv^2}{4\lambda} + \lambda(\sigu^2 + \sige^2)
&= \frac{\sigv^2}{2\lambda}, \\
\lambda(\sigu^2 + \sige^2)
&= \frac{\sigv^2}{4\lambda}, \\
\lambda^2 &= \frac{\sigv^2}{4(\sigu^2 + \sige^2)}.
\end{align*}
Taking the positive root (since $\lambda > 0$) yields $\lambda =
\sigv / (2\sqrt{\sigu^2 + \sige^2})$, and substituting back gives
$\beta = 1/(2\lambda) = \sqrt{\sigu^2 + \sige^2}/\sigv$,
establishing \eqref{eq:equilibrium}.

Uniqueness within the class of linear equilibria follows from
strict concavity of the informed trader's objective (unique
$x^*$ given $\lambda$) together with the fact that substituting
$\beta = 1/(2\lambda)$ into \eqref{eq:lambda-projection} reduces
the system to $\lambda^2 = \sigv^2 / (4(\sigu^2 + \sige^2))$,
which has a unique positive root.
\qed
\end{proof}

\begin{remark}[Sanity check: no-privacy limit]
\label{rem:sanity}
Setting $\sige = 0$ in~\eqref{eq:equilibrium} recovers the
classical Kyle~\cite{kyle1985} equilibrium $\lambda =
\sigv/(2\sigu)$, $\beta = \sigu/\sigv$. The committed Bayesian AMM
coincides with the competitive zero-profit MM exactly in the
no-privacy limit; the framing departure from textbook Kyle is
inactive when $\sige = 0$.
\end{remark}

\begin{remark}[The subsidy is intrinsic to coarse-signal pricing]
\label{rem:competitive}
\begin{sloppypar}
The privacy subsidy is not an artifact of the committed-pricing
label; it is the unavoidable cost of pricing on a signal strictly
coarser than the executed flow. Two notions of zero profit must be
separated. Zero profit \emph{conditional on the maker's
information} $\sigma(\ytil)$ forces $p = \E[v \mid \ytil]$, which
is at once the Bertrand-competitive price under coarsened
observation \emph{and} our committed Bayesian rule; since
$y = x+u$ is not $\sigma(\ytil)$-measurable when $\sige > 0$, this
efficient price earns the negative profit
$\E[(p-v)\,y] = -\lambda\sige^2 = -|\pi_M|$ on the real flow.
Zero profit \emph{unconditionally against the real flow}
$\E[(v-p)\,y] = 0$ instead gives $\tilde\lambda = \sigv/(2\sigu)$,
independent of $\sige$; but this rule over-reacts to $\ytil$ (its
quote is not the posterior mean), so it is not implementable by a
price-taking maker that observes only $\ytil$: a rival quoting the
posterior would undercut it. Informational efficiency and zero
profit against the real flow are therefore incompatible whenever
$\sige > 0$, and the wedge between them is exactly the privacy
subsidy $\lambda\sige^2$. The subsidy is the generic competitive
outcome of coarse-signal pricing, not a consequence of relaxing
zero profit; the Penumbra design of \Cref{ssec:outside}, where
$\pi_M = 0$ because the maker prices on and trades against the
\emph{same} observable, is the converse --- no coarsening, no
subsidy.
\end{sloppypar}
\end{remark}

\subsection{Comparative statics}

The equilibrium has clean comparative statics in $\sige$.

\begin{proposition}[Comparative statics]
\label{prop:comp-statics}
The equilibrium of \Cref{thm:equilibrium} satisfies:
\begin{enumerate}[leftmargin=*]
\item $\partial \lambda / \partial \sige < 0$: the price-impact
coefficient strictly decreases as privacy noise increases.
\item $\partial \beta / \partial \sige > 0$: the informed-trader
strategy intensifies as privacy noise increases.
\item (Half-revealing identity.) $\lambda \beta = 1/2$ for all
$\sige \geq 0$.
\end{enumerate}
\end{proposition}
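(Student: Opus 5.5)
The plan is to differentiate the closed forms of \Cref{thm:equilibrium} directly. Write $S(\sige) = \sqrt{\sigu^2 + \sige^2}$. Since $\sigu > 0$, $S$ is strictly positive and smooth on $\sige \geq 0$, with
\[
S'(\sige) = \frac{\sige}{S(\sige)}.
\]
By \eqref{eq:equilibrium}, $\lambda = \sigv/(2S)$ and $\beta = S/\sigv$.

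For item 1, the chain rule gives
\[
\frac{\partial \lambda}{\partial \sige} = -\frac{\sigv \sige}{2S^3}.
\]
For item 2,
\[
\frac{\partial \beta}{\partial \sige} = \frac{\sige}{\sigv S}.
\]
Both signs follow from $\sigv, \sigu > 0$, provided $\sige > 0$.

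For item 3, the product $\lambda\beta = \frac{\sigv}{2S}\cdot\frac{S}{\sigv} = \tfrac12$ identically. Alternatively, it follows with no computation from the best-response relation $\beta = 1/(2\lambda)$ established in the proof of \Cref{thm:equilibrium}, which holds for every $\sige \geq 0$.

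The only delicate point is the boundary $\sige = 0$, where both derivatives vanish. The strict inequalities therefore hold only for $\sige > 0$; at $\sige = 0$ the one-sided derivative is zero. I will state this caveat explicitly. I will also note that strict monotonicity on the whole half-line $[0,\infty)$ still holds, because $S$ is strictly increasing in $\sige$ there. So $\lambda$ is strictly decreasing and $\beta$ strictly increasing on $[0,\infty)$, which is the substantive content of items 1 and 2. A cleaner route is to argue monotonicity in $\sige^2$, where the derivatives $-\sigv/(4S^3)$ and $1/(2\sigv S)$ are nonvanishing everywhere. Nothing beyond this requires effort.
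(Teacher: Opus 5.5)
Your proposal is correct and follows the paper's proof, which likewise obtains (i)--(ii) by direct differentiation of \eqref{eq:equilibrium} and (iii) by multiplying the closed forms. Your explicit derivatives check out. Your caveat is also a correct refinement that the paper's one-line proof passes over: both derivatives vanish at $\sige = 0$, so the strict inequalities hold only for $\sige > 0$, while strict monotonicity on $[0,\infty)$ still holds.
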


\begin{proof}
Direct differentiation of \eqref{eq:equilibrium} gives (i) and
(ii). For (iii),
\[
\lambda \beta \;=\; \frac{\sigv}{2\sqrt{\sigu^2 + \sige^2}}
   \cdot \frac{\sqrt{\sigu^2 + \sige^2}}{\sigv} \;=\; \frac{1}{2}
\]
identically. \qed
\end{proof}

The identity $\lambda \beta = 1/2$ holds in textbook Kyle, where
it is the well-known half-revealing property: the equilibrium
price moves halfway from the prior toward the true value, on
average. Our result is that this identity \emph{persists exactly}
under privacy noise --- a robustness statement about Kyle's
half-revealing property, not a novel quantity. Substituting the
equilibrium back into $\ytil$ yields
\begin{align*}
p &= p_0 + \lambda \beta (v - p_0) + \lambda(u + \varepsilon)
   = \tfrac{p_0 + v}{2} + \lambda(u + \varepsilon).
\end{align*}
The conditional mean $\E[p \mid v] = (p_0 + v)/2$ is independent of
$\sige$; only the conditional variance $\Var(p \mid v) =
\lambda^2(\sigu^2 + \sige^2) = \sigv^2/4$ is also independent of
$\sige$. Both the information content and the realized noise of
the price are preserved by privacy at the price level.

The cost of privacy is therefore not visible at the level of the
price's distribution conditional on $v$. As we show in
\Cref{sec:welfare}, the cost appears instead in the expected P\&L
of the market maker against informed flow, and this is what the
LP pool / protocol treasury must absorb.

\section{Welfare decomposition: the privacy subsidy}
\label{sec:welfare}

This section computes the per-period expected profit or loss of each
agent under the equilibrium of \Cref{thm:equilibrium} and identifies
the \emph{privacy subsidy} as a closed-form quantity transferred
from the protocol/LP pool to traders.

\subsection{Per-agent expected P\&L}

The three relevant per-period expected quantities are:
\begin{align*}
\pi_I &:= \E[(v - p) \cdot x],
  &&\text{(informed trader's expected profit)}, \\
\pi_N &:= \E[(v - p) \cdot u],
  &&\text{(noise traders' expected net P\&L, a loss)}, \\
\pi_M &:= \E[(p - v) \cdot (x + u)],
  &&\text{(MM/protocol expected profit on flow)}.
\end{align*}
By construction $\pi_I + \pi_N + \pi_M = 0$: the three components
form a zero-sum decomposition of the total trade-realized P\&L
against the asset's terminal value $v$.

\begin{lemma}[Per-agent P\&L formulas]
\label{lem:pl}
Under the equilibrium $(\beta, \lambda)$ of \Cref{thm:equilibrium}:
\begin{align}
\pi_I &= +\tfrac{1}{2}\,\sigv\sqrt{\sigu^2 + \sige^2}, \label{eq:pi-I} \\
\pi_N &= -\frac{\sigv\,\sigu^2}{2\sqrt{\sigu^2+\sige^2}}, \label{eq:pi-N} \\
\pi_M &= -\frac{\sigv\,\sige^2}{2\sqrt{\sigu^2+\sige^2}}. \label{eq:pi-M}
\end{align}
\end{lemma}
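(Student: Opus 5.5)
The plan is a direct second-moment computation. I will substitute the equilibrium price from \Cref{thm:equilibrium}, namely $p = p_0 + \lambda\ytil$ with $\ytil = \beta(v-p_0) + u + \varepsilon$, into each of the three expectations. Then I expand bilinearly and use mutual independence and zero means of $v-p_0$, $u$, $\varepsilon$ (from \Cref{sec:setup}) to kill all cross terms.

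Write $w := v - p_0$, so that $v - p = w - \lambda\beta w - \lambda u - \lambda\varepsilon$. By the half-revealing identity of \Cref{prop:comp-statics}, $\lambda\beta = 1/2$, which simplifies this to
\[
v - p = \tfrac12 w - \lambda(u+\varepsilon).
\]

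\textbf{Informed trader.} Since $x = \beta w$,
\[
\pi_I = \beta\,\E\bigl[\tfrac12 w^2\bigr] = \tfrac12\beta\sigv^2 .
\]
Inserting $\beta = \sqrt{\sigu^2+\sige^2}/\sigv$ gives \eqref{eq:pi-I}.

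\textbf{Noise traders.} Here only the $u^2$ term survives, so
\[
\pi_N = \E\bigl[(\tfrac12 w - \lambda u - \lambda\varepsilon)\,u\bigr] = -\lambda\sigu^2 .
\]
With $\lambda = \sigv/(2\sqrt{\sigu^2+\sige^2})$ this is \eqref{eq:pi-N}.

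\textbf{Market maker.} I would avoid computing $\pi_M$ from scratch. Instead I use the zero-sum identity $\pi_I+\pi_N+\pi_M=0$, which holds pointwise since $(v-p)x + (v-p)u + (p-v)(x+u) = 0$. This gives
\[
\pi_M = -\tfrac12\beta\sigv^2 + \lambda\sigu^2 .
\]
Writing $S := \sqrt{\sigu^2+\sige^2}$, this equals
\[
-\tfrac{\sigv S}{2} + \tfrac{\sigv\sigu^2}{2S} = \frac{\sigv(\sigu^2 - S^2)}{2S} = -\frac{\sigv\sige^2}{2S},
\]
which is \eqref{eq:pi-M}.

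As a cross-check, I would also compute $\pi_M$ directly:
\[
\pi_M = -\E\bigl[(\tfrac12 w - \lambda u - \lambda\varepsilon)(\beta w + u)\bigr] = -\tfrac12\beta\sigv^2 + \lambda\sigu^2 ,
\]
because $\varepsilon$ is uncorrelated with $x+u$. This also makes visible the interpretation in \Cref{rem:competitive}. The projection condition $\E[(v-p)\ytil]=0$ implies $\E[(v-p)y] = -\E[(v-p)\varepsilon] = \lambda\sige^2$. Hence
\[
\pi_M = -\lambda\sige^2 ,
\]
and this form is consistent with the previous expression via $\lambda(\sigu^2+\sige^2) = \tfrac12\beta\sigv^2$, which holds since $\lambda^2 S^2 = \sigv^2/4$ and $\beta = 1/(2\lambda)$.

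No step is a genuine obstacle. The only point requiring care is the sign convention: $\pi_M$ is defined with $(p-v)$ rather than $(v-p)$, and one must keep it straight so that the subsidy appears with the negative sign claimed. Verifying $\pi_M$ both via the zero-sum identity and via the orthogonality argument guards against that error.
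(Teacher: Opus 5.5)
Your proposal is correct and follows the paper's proof essentially step for step: you compute $\pi_I$ and $\pi_N$ by direct expansion using independence and $\lambda\beta = 1/2$, then obtain $\pi_M$ from the zero-sum identity. Your extra orthogonality cross-check, giving $\pi_M = -\lambda\sige^2$, is also correct and is exactly the form the paper uses in \Cref{rem:competitive}.
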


\begin{proof}
Use $x = \beta(v - p_0)$, $u, \varepsilon \indep v$, and the
half-revealing identity $\lambda\beta = 1/2$ from
\Cref{prop:comp-statics}.

\emph{(i) Informed.}
\begin{align*}
\pi_I &= \E[(v - p_0 - \lambda(x + u + \varepsilon)) \cdot \beta(v - p_0)] \\
      &= \beta\,\E[(v-p_0)^2] - \lambda\beta\,\E[(x+u+\varepsilon)(v-p_0)] \\
      &= \beta\,\sigv^2 - \tfrac{1}{2}\,\beta\,\sigv^2
       = \tfrac{1}{2}\,\beta\,\sigv^2.
\end{align*}
Substituting $\beta = \sqrt{\sigu^2+\sige^2}/\sigv$ yields
\eqref{eq:pi-I}.

\emph{(ii) Noise.} Since $u \indep (x, \varepsilon)$ and
$\E[u] = 0$,
\begin{align*}
\pi_N &= \E[(v - p_0 - \lambda(x+u+\varepsilon)) \cdot u]
       = -\lambda\,\E[u\,(x+u+\varepsilon)]
       = -\lambda\,\sigu^2.
\end{align*}
With $\lambda = \sigv/(2\sqrt{\sigu^2+\sige^2})$, this gives
\eqref{eq:pi-N}.

\emph{(iii) MM/protocol.} By the zero-sum identity,
\[
\pi_M = -(\pi_I + \pi_N)
      = -\tfrac{1}{2}\,\sigv\sqrt{\sigu^2+\sige^2}
        + \frac{\sigv\,\sigu^2}{2\sqrt{\sigu^2+\sige^2}}
      = -\frac{\sigv\,\sige^2}{2\sqrt{\sigu^2+\sige^2}},
\]
which is \eqref{eq:pi-M}. \qed
\end{proof}

\subsection{The privacy subsidy}

The main welfare result follows directly.

\begin{theorem}[Privacy subsidy]
\label{thm:welfare}
Under the equilibrium of \Cref{thm:equilibrium}, the protocol's
expected loss per period is
\begin{equation}
|\pi_M| \;=\; \frac{\sigv\,\sige^2}{2\sqrt{\sigu^2 + \sige^2}}
        \;\geq\; 0,
\label{eq:subsidy}
\end{equation}
with equality if and only if $\sige = 0$. The quantity $|\pi_M|$
is the \emph{privacy subsidy}: the per-period transfer from the
protocol/LP pool to traders, induced by the privacy
mechanism. For protocol break-even, the total fees collected per
period must satisfy
\[
\mathrm{fees}_{\text{period}} \;\geq\; |\pi_M|.
\]
\end{theorem}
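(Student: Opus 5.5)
The plan is to read the result off \Cref{lem:pl} and then handle the sign, the equality case, and the break-even interpretation separately. By \eqref{eq:pi-M}, $\pi_M = -\sigv\sige^2/(2\sqrt{\sigu^2+\sige^2})$. Since $\sigv>0$, $\sige^2\ge 0$, and $\sqrt{\sigu^2+\sige^2}\ge\sigu>0$, we have $\pi_M\le 0$. Hence $|\pi_M| = -\pi_M$, which is exactly the expression in \eqref{eq:subsidy}, and it is nonnegative.

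\emph{Equality case.} The numerator $\sigv\sige^2$ vanishes iff $\sige=0$, because $\sigv>0$. The denominator is finite and positive for every $\sige\ge 0$, since $\sigu>0$ keeps it bounded away from zero. So $|\pi_M|=0$ iff $\sige=0$. As a consistency check, I would also compute $\pi_M$ directly rather than through the zero-sum identity:
\[
\pi_M=\E[(p_0+\lambda\ytil - v)\,y]=\lambda\E[\ytil\, y]-\E[(v-p_0)y].
\]
Using independence of $\varepsilon$ from $(v,u)$, this gives $\lambda(\beta^2\sigv^2+\sigu^2)-\beta\sigv^2$. Substituting $\lambda\beta=\tfrac12$ from \Cref{prop:comp-statics} and $\lambda(\beta^2\sigv^2+\sigu^2+\sige^2)=\beta\sigv^2$ from \eqref{eq:lambda-projection}, this equals $-\lambda\sige^2$. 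This agrees with \Cref{rem:competitive} and confirms the sign without relying on the signs of $\pi_I$ and $\pi_N$ separately.

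\emph{Break-even.} The protocol/LP pool's per-period expected net position is $\pi_M + \mathrm{fees}_{\text{period}}$, treating fees as a transfer into the pool. Break-even means this quantity is nonnegative, which is equivalent to $\mathrm{fees}_{\text{period}}\ge -\pi_M=|\pi_M|$. The statement that this transfer goes ``to traders'' comes from the zero-sum identity $\pi_I+\pi_N=-\pi_M=|\pi_M|$. I expect the main obstacle to be definitional rather than computational: the fee inequality depends on how fees enter the accounting. I would state explicitly that fees are modeled as lump-sum per-period transfers that do not change the equilibrium of \Cref{thm:equilibrium}, so that $\pi_M$ is unchanged and the inequality is exact.
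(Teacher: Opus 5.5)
Your proposal is correct and is essentially the paper's proof, which reads \eqref{eq:subsidy} off \eqref{eq:pi-M} in \Cref{lem:pl} and then notes non-negativity and the $\sige = 0$ case. Your additions fill in points the paper leaves implicit: the explicit ``only if'' argument, the direct check $\pi_M = -\lambda\sige^2$ (which needs only \eqref{eq:lambda-projection}, not $\lambda\beta = \tfrac12$), and the statement that fees are lump-sum transfers that leave the no-fee equilibrium unchanged.
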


\begin{proof}
Equation~\eqref{eq:subsidy} is \eqref{eq:pi-M} of \Cref{lem:pl};
non-negativity is immediate. Equality at $\sige = 0$ recovers
the textbook-Kyle MM zero-profit identity (\Cref{rem:sanity}). \qed
\end{proof}

\paragraph{Naming and fee model.}
We label $|\pi_M|$ \emph{privacy subsidy} as a memorable analog to
LVR; equivalent neutral phrasings are \emph{noise-induced LP loss}
or \emph{protocol's adverse-selection cost from privacy}. The
break-even bound is computed against the no-fee equilibrium: a
per-trade fee $f$ breaks the linear-strategy structure (informed
traders cease trading when $|v - p_0| < f$), so a complete
fee-equilibrium analysis with endogenous volume response is left
for future work.

\subsection{Comparative statics of the subsidy}

\begin{proposition}[Subsidy asymptotics and shape]
\label{prop:subsidy-asymptotics}
Let $|\pi_M|(\sige)$ denote the subsidy of \eqref{eq:subsidy} as a
function of $\sige \geq 0$ with $\sigv, \sigu$ fixed.
\begin{enumerate}[leftmargin=*]
\item \emph{Low-privacy expansion}: as $\sige \downarrow 0$,
$|\pi_M|(\sige) = \tfrac{\sigv}{2\sigu}\,\sige^2 + O(\sige^4)$.
\item \emph{High-privacy limit}: as $\sige \to \infty$,
$|\pi_M|(\sige) = \tfrac{1}{2}\,\sigv\,\sige + O(\sige^{-1})$.
\item $|\pi_M|$ is strictly increasing in $\sige$ on $[0, \infty)$.
\item $|\pi_M|$ has a single inflection point at $\sige^\star =
\sqrt{2}\,\sigu$: it is convex on $[0, \sige^\star]$ and concave on
$[\sige^\star, \infty)$.
\end{enumerate}
\end{proposition}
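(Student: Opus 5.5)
The plan is to treat $|\pi_M|$ from \Cref{thm:welfare} as an explicit one-variable function and prove all four items by elementary calculus. Write $a=\sigv$, $b=\sigu$, $s=\sige$, so that
\[
f(s)=\frac{a}{2}\,s^2\,(b^2+s^2)^{-1/2}.
\]
Items (i) and (ii) need only binomial expansions. Items (iii) and (iv) follow from explicit formulas for $f'$ and $f''$. In each derivative I will factor out a positive power of $(b^2+s^2)$ so that the sign is set by a simple polynomial.

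\emph{Asymptotics.} For (i), factor out $b$ to get
\[
f(s)=\frac{a s^2}{2b}\,\bigl(1+s^2/b^2\bigr)^{-1/2}.
\]
Expanding $(1+z)^{-1/2}=1-z/2+O(z^2)$ with $z=s^2/b^2$ gives $f(s)=\frac{a}{2b}s^2-\frac{a}{4b^3}s^4+O(s^6)$, which yields the stated $O(\sige^4)$ remainder. For (ii), factor out $s$ instead:
\[
f(s)=\frac{a s}{2}\,\bigl(1+b^2/s^2\bigr)^{-1/2}=\frac{a}{2}s-\frac{ab^2}{4s}+O(s^{-3}),
\]
which gives the $O(\sige^{-1})$ error.

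\emph{Monotonicity.} Differentiating with the product rule and combining over the common denominator $(b^2+s^2)^{3/2}$ gives
\[
f'(s)=\frac{a}{2}\cdot\frac{s\,(2b^2+s^2)}{(b^2+s^2)^{3/2}}.
\]
This is strictly positive for $s>0$ and vanishes only at $s=0$. Hence $f$ is strictly increasing on $[0,\infty)$, which proves (iii). Combined with $f(0)=0$, it also re-derives the equality case of \Cref{thm:welfare}.

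\emph{Inflection.} This is the only step where an algebra slip could misplace the threshold, so I will be careful with the common denominator. Write $f'=\frac{a}{2}g$ with $g(s)=(2b^2s+s^3)(b^2+s^2)^{-3/2}$. Differentiating and putting everything over $(b^2+s^2)^{5/2}$, the numerator is
\[
(2b^2+3s^2)(b^2+s^2)-3s^2(2b^2+s^2).
\]
Expanding, the $s^4$ terms cancel and it collapses to $2b^4-b^2s^2=b^2(2b^2-s^2)$. Therefore
\[
f''(s)=\frac{a}{2}\cdot\frac{b^2\,(2b^2-s^2)}{(b^2+s^2)^{5/2}}.
\]
Since $a,b>0$, $f''$ is positive on $[0,\sqrt2\,b)$, vanishes only at $s=\sqrt2\,b$, and is negative beyond it. So $f$ is convex on $[0,\sige^\star]$ and concave on $[\sige^\star,\infty)$, with the unique inflection point at $\sige^\star=\sqrt2\,\sigu$, as claimed in (iv).
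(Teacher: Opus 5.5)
Your proposal is correct and follows the paper's proof: the paper establishes (iii)--(iv) from exactly the same first and second derivatives, and there too the sign of the second derivative is governed by the factor $2\sigu^2-\sige^2$. Your binomial expansions for (i)--(ii) are also correct; they fill in the asymptotic items that the paper states without a written proof, and they even give the next-order terms $-\sigv\sige^4/(4\sigu^3)$ and $-\sigv\sigu^2/(4\sige)$.
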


\begin{proof}[Proof of (iii)--(iv)]
Differentiating $|\pi_M|$ twice in $\sige$,
\[
\frac{\partial |\pi_M|}{\partial \sige}
= \frac{\sigv\,\sige\,(2\sigu^2 + \sige^2)}{2\,(\sigu^2 + \sige^2)^{3/2}}
> 0
\]
for $\sige > 0$, establishing (iii). The second derivative simplifies
to
\[
\frac{\partial^2 |\pi_M|}{\partial \sige^2}
= \frac{\sigv\,\sigu^2\,(2\sigu^2 - \sige^2)}{2\,(\sigu^2 + \sige^2)^{5/2}},
\]
which vanishes exactly at $\sige^2 = 2\sigu^2$, is positive for
$\sige^2 < 2\sigu^2$, and negative for $\sige^2 > 2\sigu^2$. This
gives (iv). \qed
\end{proof}

Item (i): for small privacy noise, the subsidy is quadratic in
$\sige$ --- doubling the privacy parameter quadruples the LP-pool
cost. Item (ii): asymptotically linear growth in the high-privacy
regime. Item (iv): the protocol's marginal cost
$\partial|\pi_M|/\partial\sige$ is itself increasing for small
$\sige$ but eventually decreasing as $\sige$ grows past
$\sqrt{2}\,\sigu$ --- past this threshold, additional privacy is
\emph{less} expensive at the margin. The inflection at
$\sige^\star = \sqrt{2}\,\sigu$ is a structural feature of the
square-root denominator in \eqref{eq:subsidy}.

\subsection{Welfare incidence}

\begin{corollary}[Noise traders also benefit from privacy]
\label{cor:noise-helped}
$\partial \pi_N / \partial \sige > 0$ for all $\sige > 0$: the
noise (uninformed) traders' expected loss is \emph{strictly
decreasing} in the privacy parameter.
\end{corollary}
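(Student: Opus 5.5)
The plan is to start from the closed form for $\pi_N$ in \Cref{lem:pl}, namely $\pi_N = -\lambda\sigu^2$ with $\lambda = \sigv/(2\sqrt{\sigu^2+\sige^2})$ from \Cref{thm:equilibrium}. The sign of $\partial\pi_N/\partial\sige$ then reduces to the sign of $-\partial\lambda/\partial\sige$. The monotonicity of $\lambda$ is already available as item (i) of \Cref{prop:comp-statics}, so the argument is one line: since $\sigu^2>0$ is a constant independent of $\sige$, we have $\partial\pi_N/\partial\sige = -\sigu^2\,\partial\lambda/\partial\sige > 0$.

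For self-containedness I would also record the explicit derivative. Differentiating $\pi_N(\sige) = -\tfrac12\sigv\sigu^2(\sigu^2+\sige^2)^{-1/2}$ gives
\[
\frac{\partial \pi_N}{\partial \sige} = \frac{\sigv\,\sigu^2\,\sige}{2\,(\sigu^2+\sige^2)^{3/2}}.
\]
This is strictly positive for every $\sige>0$ because $\sigv,\sigu>0$. At $\sige=0$ it vanishes, which is why the statement restricts to $\sige>0$. Since $\pi_N<0$ throughout, the expected loss $|\pi_N| = -\pi_N$ is strictly decreasing, as claimed.

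There is no genuine obstacle. The only points needing care are bookkeeping. First, we are differentiating along the equilibrium path, so $\lambda$ must be treated as the equilibrium function of $\sige$ rather than held fixed. Second, the direct channel is absent: $\varepsilon\indep u$ means $\E[u\varepsilon]=0$, so $\sige$ enters $\pi_N$ only through $\lambda$. Both facts are already established in the proof of \Cref{lem:pl}.

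As an optional interpretive check, I would note that the same derivative equals $\partial|\pi_M|/\partial\sige$ minus $\partial\pi_I/\partial\sige$ up to sign via the zero-sum identity $\pi_I+\pi_N+\pi_M=0$. This confirms that the protocol's extra loss is split between the insider and the noise traders.
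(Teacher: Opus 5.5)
Your proposal is correct and is essentially the paper's proof, which differentiates \eqref{eq:pi-N} directly to get $\partial\pi_N/\partial\sige = \sigv\sigu^2\sige/\bigl(2(\sigu^2+\sige^2)^{3/2}\bigr) > 0$ for $\sige>0$. One small correction to your optional check: the zero-sum identity gives $\partial\pi_N/\partial\sige = \partial|\pi_M|/\partial\sige - \partial\pi_I/\partial\sige$ exactly, not merely up to sign.
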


\begin{proof}
Differentiating \eqref{eq:pi-N},
\[
\frac{\partial \pi_N}{\partial \sige}
= \frac{\sigv\,\sigu^2\,\sige}{2\,(\sigu^2 + \sige^2)^{3/2}} > 0
\quad\text{for } \sige > 0. \qed
\]
\end{proof}

Combined with \Cref{thm:welfare}, the welfare picture under privacy
is therefore:
\begin{itemize}[leftmargin=*]
\item informed trader gains: $\partial \pi_I / \partial \sige > 0$;
\item noise trader gains (\emph{loses less}): $\partial \pi_N /
\partial \sige > 0$;
\item protocol/LP pool loses by exactly that sum:
$\partial |\pi_M| / \partial \sige > 0$;
\item total welfare is preserved (zero sum).
\end{itemize}

Privacy improves $\pi_N$, but not at the informed trader's
expense. Both trader types gain; the entire transfer is borne
by the protocol. In the low-privacy
regime, the leading-order Taylor expansions of $\pi_I$ and $\pi_N$
around $\sige = 0$ are equal,
\[
\pi_I(\sige) - \pi_I(0)
\;=\; \pi_N(\sige) - \pi_N(0)
\;=\; \frac{\sigv}{4\sigu}\,\sige^2 + O(\sige^4),
\]
so the privacy subsidy is split symmetrically between informed and
noise traders to leading order. The asymmetry only appears at
higher orders: in the high-privacy limit, the informed trader
captures essentially all the subsidy ($\pi_I \sim \sigv\sige/2$)
while the noise trader's loss vanishes
($\pi_N \to 0$, i.e.\ their gain over the classical Kyle benchmark
saturates at $\sigv\sigu/2$).

\begin{remark}[The privacy "gain" is gross-of-fees;
\Cref{cor:noise-helped} is welfare-neutral net-of-fees]
\label{rem:welfare-neutral}
The statement ``noise traders also benefit from privacy''
(\Cref{cor:noise-helped}) is a \emph{gross-of-fees} observation
about the no-fee equilibrium of \Cref{thm:equilibrium}. We now
show that, at the same no-fee equilibrium volumes, charging the
break-even fee of \Cref{thm:welfare} via a volume-proportional
levy exactly cancels both traders' incremental gains.

At equilibrium the expected absolute volumes are
$\E|x| = (\sigv/2\lambda)\sqrt{2/\pi}$ and
$\E|u| = \sigu\sqrt{2/\pi}$, so total volume
$Q := \E|x| + \E|u| = \sqrt{2/\pi}\,(\sqrt{\sigu^2 + \sige^2} + \sigu)$
using $\lambda = \sigv/(2\sqrt{\sigu^2 + \sige^2})$ from
\Cref{thm:equilibrium}. The break-even rate is $f = |\pi_M|/Q$.
Each side pays its volume share:
\begin{align*}
  f \cdot \E|x|
  &\;=\; \frac{\sigv\,(\sqrt{\sigu^2+\sige^2} - \sigu)}{2}
  \;=\; \pi_I(\sige) - \pi_I(0), \\
  f \cdot \E|u|
  &\;=\; \frac{\sigv\,\sigu\,(\sqrt{\sigu^2+\sige^2} - \sigu)}
              {2\sqrt{\sigu^2+\sige^2}}
  \;=\; \pi_N(\sige) - \pi_N(0).
\end{align*}
Each fee equals the corresponding trader's incremental gain
over the $\sige = 0$ baseline; net-of-fees, both
$\pi_I - f\,\E|x| = \sigv\sigu/2$ and
$\pi_N - f\,\E|u| = -\sigv\sigu/2$ revert to the classical Kyle
values, while the MM is exactly compensated. Privacy is therefore
\emph{exactly welfare-neutral} under the volume-proportional
break-even fee, at the partial-equilibrium level of analysis
(no-fee equilibrium volumes, with fee revenue redistributed back
to the LP pool). The full fee-equilibrium analysis, in which the
fee distorts the linear-strategy structure as flagged after
\Cref{thm:welfare}, remains open and may yield additional
deadweight loss.
\end{remark}

\section{Application to zero-knowledge exchange designs}
\label{sec:application}

The model of \Cref{sec:setup} applies cleanly to one class of
zero-knowledge exchange designs: smart-contract AMMs whose
observation of order flow is perturbed by an explicit additive
Gaussian noise injection. We treat this canonical application in
\Cref{ssec:shielded-amm}, then situate three nearby privacy
designs that do \emph{not} fit our framework as currently
formulated in \Cref{ssec:outside}.

\subsection{Primary application: shielded AMMs with DP-style noise}
\label{ssec:shielded-amm}

A \emph{shielded AMM with differential-privacy (DP) noise} is a
constant-function-style AMM augmented with a privacy layer that
injects calibrated Gaussian noise into the observable order flow
before the AMM applies its update rule. Concretely, after each
trade the protocol observes $\ytil = y + \varepsilon$ with
$\varepsilon \sim N(0, \sige^2)$ drawn by the privacy mechanism,
and the AMM updates reserves --- and therefore the spot price ---
according to this noisy signal. The privacy parameter $\sige$ is
exactly the standard deviation of the injected DP noise.

\begin{sloppypar}
The mapping is literal: $\sige$ in our framework equals the DP
noise scale in the implementation. \Cref{thm:welfare} therefore
yields a direct break-even fee prescription for any such shielded
AMM: per-period fees must total at least the subsidy $|\pi_M|$ to
compensate the LP pool.
\ifwine
For a BTC/USDT calibration, at $\sige = \sigu$ the LP/protocol must
collect approximately $\$1$M/day --- comparable to the entire revenue
from a $0.1\%$ fee on $\$1$B of daily volume; the extended version
tabulates the full fee-floor schedule.
\else
\Cref{app:numerical} tabulates the resulting fee
floor in USD per day under a BTC/USDT calibration; at
$\sige = \sigu$, the LP/protocol must collect approximately
$\$1$M/day --- comparable to the entire revenue from a $0.1\%$
fee on $\$1$B of daily volume.
\fi
\end{sloppypar}

This setting pairs cleanly with the DP-feasibility analysis of
Chitra, Angeris, and Evans~\cite{chitra2022dpcfmm}, who study a
\emph{Uniform Random Execution} mechanism that achieves
$(\varepsilon,\delta)$-DP in constant-function market makers and
characterize the privacy parameter as a function of curvature and
trade count. Their analysis is on the realizability side: when
and how DP can be achieved in a CFMM. Our framework is the
complementary welfare side: given that a DP layer of intensity
$\sige$ is in place, what does the LP pool pay?

\subsection{Mechanisms outside our framework}
\label{ssec:outside}

Three nearby privacy designs warrant separate analysis frameworks
that our additive-Gaussian-noise model does not capture.

\paragraph{Penumbra-style batched swaps.}
Penumbra~\cite{penumbra-docs} clears swaps via per-block batches
in which the cleared price depends on the aggregate batch flow.
The Bayesian-MM observation is the \emph{exact} aggregate
$Y_\tau = \sum_t (x_t + u_t)$ over $\tau$ pooled noise-trader
draws --- not a noise-perturbed version of $Y_\tau$. Treating one
batch as one period (so per-batch noise-trader variance is
$\tau\sigu^2$), direct calculation gives $\lambda =
\sigv/(2\sigu\sqrt{\tau})$ and the informed batch-total
coefficient $\beta = \sigu\sqrt{\tau}/\sigv$ --- textbook Kyle
with the rescaling $\sigu \to \sigu\sqrt{\tau}$. Because the MM
both prices on and trades against the same observable $Y_\tau$,
the Bayesian projection identity gives $\pi_M = 0$ exactly.
Batching reshapes market depth and informed-trader intensity but
generates no privacy subsidy under this framework.

\paragraph{Suave-style sealed-bid order-flow auctions.}
Sealed bidding with delayed reveal~\cite{flashbots-suave}
creates \emph{temporal} information asymmetry rather than
additive observation noise. The block proposer commits to a
price ex ante and trades against revealed flow ex post; the
adverse-selection cost in this design is closer to LVR's
stale-price arbitrage (\Cref{sec:lvr}) than to our $\sige$-noise
setting. A continuous-time analysis with explicit time-lag is
required.

\paragraph{Renegade-style midpoint-pegged crossing.}
Renegade~\cite{renegade-wp} matches peer orders at an external
lit-exchange midpoint via MPC: the on-chain mechanism consumes
no flow signal to compute price. Such oracle-pegged designs fall
entirely outside any Kyle-style analysis, since price discovery
is exogenous.

\section{Connection to Loss-Versus-Rebalancing}
\label{sec:lvr}

The privacy subsidy of \Cref{thm:welfare} occupies the same
conceptual slot as the \emph{Loss-Versus-Rebalancing} (LVR) measure
of Milionis et al.~\cite{milionis2022lvr}. Both are closed-form,
per-period welfare quantities that measure the adverse-selection
cost borne by an automated pricing mechanism in the presence of
informed traders, and both yield direct break-even fee
prescriptions.

The source of the cost differs across the two results:
\begin{itemize}[leftmargin=*]
\item LVR captures the cost of \emph{stale prices}: the AMM's
price function is committed and lags the true reference price
which arbitrageurs can read externally. The information asymmetry
is temporal.
\item The privacy subsidy captures the cost of \emph{privacy
noise}: the AMM observes its price-relevant signal with additive
Gaussian noise injected by the privacy mechanism. The
information asymmetry is informational, not temporal.
\end{itemize}

\begin{sloppypar}
A natural open question is whether these costs combine additively
in a continuous-time AMM that is simultaneously price-lagged and
privacy-aggregated. A formal model would require lifting LVR's
continuous-time setup and our single-period Kyle setup into a
common framework, which we do not attempt here. As a heuristic
suggestion for follow-up work, we conjecture that the
break-even fee in such a setting decomposes to leading order as
\end{sloppypar}
\[
\mathrm{fees}_{\text{period}} \;\geq\; \mathrm{LVR}
   \,+\, |\pi_M|,
\]
but emphasize that this is speculative. The cross-terms between
time-lag and privacy noise need not vanish at higher orders: an
arbitrageur exploiting a stale price may also exploit the
privacy-noise statistics, and these channels' welfare costs need
not be independent. A vanishing of cross-terms at leading order
would require $\varepsilon$ to be temporally uncorrelated with the
stale-price process. A rigorous treatment is left for future work.

\section{Conclusion}
\label{sec:conclusion}

We have derived the closed-form linear Kyle equilibrium in a
single-period market in which a committed Bayesian AMM observes
order flow perturbed by independent Gaussian privacy noise of
scale $\sige$. The equilibrium price-impact coefficient is
$\lambda = \sigv/(2\sqrt{\sigu^2 + \sige^2})$, monotonically
decreasing in $\sige$. The corresponding informed-trader strategy
intensifies symmetrically as $\beta = \sqrt{\sigu^2+\sige^2}/\sigv$,
and the product $\lambda\beta = 1/2$ is invariant in $\sige$.
The welfare decomposition identifies a per-period transfer
$|\pi_M| = \sigv\sige^2/(2\sqrt{\sigu^2+\sige^2})$ from the
protocol/LP pool to traders. This is the
\emph{privacy subsidy}, the break-even fee that any
privacy-aggregated exchange must charge to compensate its
liquidity layer. The result applies directly to shielded
AMMs with additive-noise (differential-privacy) injection.
Related privacy designs --- Penumbra-style batched swaps,
Suave-style sealed-bid order-flow auctions, and oracle-pegged
crossings such as Renegade --- require separate frameworks:
batching reduces to Kyle with rescaled noise-trader variance
(no subsidy under Bayesian MM); sealed-bid auctions sit closer
to LVR's temporal-asymmetry regime; oracle-pegged designs fall
entirely outside Kyle.

\paragraph{Future work.}
Extensions left to subsequent work include a Glosten--Milgrom
bid-ask-spread analog under privacy noise; other privacy
mechanisms (directional-only, bucketed, time-delayed observation);
multi-period dynamics aimed at a combined LVR-plus-privacy
decomposition; mechanized formalization in a proof assistant;
and empirical calibration of $\sige$ from live zero-knowledge
exchange traces.

\appendix

\section{Numerical illustration}
\label{app:numerical}

The closed-form results of \Cref{thm:equilibrium,thm:welfare}
admit immediate numerical evaluation. \Cref{fig:subsidy} plots
$|\pi_M|(\sige)$ for $\sigv = \sigu = 1$ over $\sige \in [0, 5]$;
\Cref{tab:dimensionless,tab:btc} tabulate $\lambda$, $\beta$, and
$|\pi_M|$ at representative parameter values.

\begin{figure}[H]
\centering
\begin{tikzpicture}
\begin{axis}[
    xlabel = {$\sigma_\varepsilon$ (privacy noise scale)},
    ylabel = {$|\pi_M|$ (privacy subsidy)},
    width  = 0.65\textwidth,
    height = 0.32\textwidth,
    domain = 0:5,
    samples = 120,
    grid    = both,
    xmin = 0, xmax = 5,
    ymin = 0
]
\addplot[thick, blue] {x^2 / (2*sqrt(1 + x^2))};
\addplot[dashed, gray, thick] coordinates {(1.4142, 0) (1.4142, 0.577)};
\node[above right, font=\small] at (axis cs:1.4142, 0.577)
    {$\sigma_\varepsilon^\star = \sqrt{2}\,\sigma_u$};
\end{axis}
\end{tikzpicture}
\caption{Privacy subsidy $|\pi_M|$ vs. $\sige$ for
$\sigv = \sigu = 1$. The convexity-to-concavity inflection at
$\sige^\star = \sqrt{2}$ marks the transition between the quadratic
low-privacy regime and the linear high-privacy regime.}
\label{fig:subsidy}
\end{figure}

\subsection{Dimensionless table}

\Cref{tab:dimensionless} reports $\lambda$, $\beta$, and $|\pi_M|$
in units where $\sigv = \sigu = 1$. The convexity-to-concavity
inflection of $|\pi_M|$ at $\sige^\star = \sqrt{2}\,\sigu \approx
1.414$ (\Cref{prop:subsidy-asymptotics}(iv)) is bracketed by the
$\sige \in \{1.0, 1.414, 2.0\}$ rows.

\begin{table}[H]
\centering
\begin{tabular}{c|ccc|c}
$\sige$ & $\lambda$ & $\beta$ & $|\pi_M|$ & note \\
\hline
$0$    & $0.500$ & $1.000$ & $0.000$ & textbook Kyle \\
$0.5$  & $0.447$ & $1.118$ & $0.112$ & low-privacy regime \\
$1.0$  & $0.354$ & $1.414$ & $0.354$ & $\sige = \sigu$ \\
$\sqrt{2}$ & $0.289$ & $1.732$ & $0.577$ & $\sige = \sige^\star$ (inflection) \\
$2.0$  & $0.224$ & $2.236$ & $0.894$ & past inflection \\
$3.0$  & $0.158$ & $3.162$ & $1.423$ & high-privacy \\
$5.0$  & $0.098$ & $5.099$ & $2.451$ & far high-privacy \\
\end{tabular}
\caption{Dimensionless equilibrium values for
$\sigv = \sigu = 1$, varying $\sige$. The privacy subsidy
$|\pi_M|$ grows quadratically near $\sige = 0$ (slope $\to 0$),
becomes maximally steep near $\sige^\star = \sqrt{2}\,\sigu$, and
asymptotes to linear $|\pi_M| \approx \sige/2$ for large $\sige$.}
\label{tab:dimensionless}
\end{table}

\subsection{BTC-calibrated example}

For an illustrative BTC/USDT calibration, take a per-day window
with $\sigv = \$3{,}000$ (corresponding to $\sim 3\%$ daily
volatility on a $\$100{,}000$ asset) and $\sigu = 1{,}000$~BTC
per day. \Cref{tab:btc} reports $|\pi_M|$ in USD per day across
several $\sige$ values, expressed as a multiple of $\sigu$.

\begin{table}[H]
\centering
\begin{tabular}{c|c|c}
$\sige / \sigu$ & $|\pi_M|$ (USD/day) & As fraction of $\sigv \sigu$ \\
\hline
$0.1$  & \$$\sim 15{,}000$ & $0.0050$ \\
$0.5$  & \$$\sim 335{,}000$ & $0.112$ \\
$1.0$  & \$$\sim 1{,}060{,}000$ & $0.354$ \\
$\sqrt{2}$ & \$$\sim 1{,}730{,}000$ & $0.577$ \\
$2.0$  & \$$\sim 2{,}680{,}000$ & $0.894$ \\
\end{tabular}
\caption{Per-day privacy subsidy in USD for a BTC/USDT
calibration with $\sigv = \$3{,}000$/day and $\sigu = 1{,}000$~BTC/day.
($\sigu$ is the noise-trader-component standard deviation, not
total daily volume; typical BTC/USDT daily volume is one to two
orders of magnitude larger.) At $\sige = \sigu$, the LP/protocol
must collect approximately $\$1$M/day in fees to break even.}
\label{tab:btc}
\end{table}

For context, BTC/USDT spot volume on a typical centralized exchange
is on the order of $\$1$--$\$5$~billion per day. A 0.1\% fee on
$\$1$B daily volume yields $\$1$M/day in revenue. Under the
calibration of \Cref{tab:btc}, this revenue roughly breaks even
against the privacy subsidy at $\sige = \sigu$ (the subsidy is
$\sim\$1.06$M, so a $0.1\%$ fee on $\$1$B falls about $6\%$ short),
and falls further short for larger $\sige$. The implication for privacy-aggregated exchange
designers is that fee schedules and privacy parameters cannot be
chosen independently.

\bibliographystyle{splncs04}
\bibliography{references}

\end{document}